\documentclass[a4paper,english]{lipics-v2016}
 
\bibliographystyle{plainurl}

\usepackage{microtype}
\usepackage[utf8]{inputenc}
\usepackage{placeins}

\usepackage{algorithmicx}
\usepackage[Algorithm,ruled]{algorithm}
\algrenewcommand\algorithmicindent{0.75em}
\algrenewcommand\alglinenumber[1]{\scriptsize #1:}
\setlength{\intextsep}{5pt}
\setlength{\textfloatsep}{5pt}
\setlength{\floatsep}{5pt}

\pdfoutput=1
\pdfsuppresswarningpagegroup=1

\newcommand{\oequiv}{\mathrel{\stackrel{\makebox[0pt]{\mbox{\normalfont\tiny o}}}{\equiv}}}
\newcommand{\function}{\textsc}
\newcommand{\tab}{\hspace{\algorithmicindent}}

\title{Non-uniform Replication}
\titlerunning{Non-uniform Replication}

\author[1]{Gonçalo Cabrita}
\author[2]{Nuno Preguiça}
\affil[1]{NOVA LINCS \& DI, FCT, Universidade NOVA de Lisboa, Caparica, Portugal\\
  \texttt{g.cabrita@campus.fct.unl.pt}}
\affil[2]{NOVA LINCS \& DI, FCT, Universidade NOVA de Lisboa, Caparica, Portugal\\
  \texttt{nuno.preguica@fct.unl.pt}}
\authorrunning{G. Cabrita and N. Preguiça}

\Copyright{Gonçalo Cabrita and Nuno Preguiça}

\subjclass{C.2.4 Distributed Systems} 
\keywords{Non-uniform Replication; Partial Replication; Replicated Data Types; Eventual Consistency}%

%\EventEditors{James Aspnes, Alysson Bessani, Pascal Felber, and Jo{\~{a}}o Leit{\~{a}}o}
%\EventNoEds{4}
%\EventLongTitle{21st International Conference on Principles of Distributed Systems (OPODIS 2017)}
%\EventShortTitle{OPODIS 2017}
%\EventAcronym{OPODIS}
%\EventYear{2017}
%\EventDate{December 18--20, 2017}
%\EventLocation{Lisbon, Portugal}
%\EventLogo{}
%\SeriesVolume{95}
%\ArticleNo{31}

\begin{document}

\maketitle

\begin{abstract}
	Replication is a key technique in the design of efficient and reliable distributed systems. As information grows, it becomes difficult or even impossible to store all information at every replica. A common approach to deal with this problem is to rely on partial replication, where each replica maintains only a part of the total system information. As a consequence, a remote replica might need to be contacted for computing the reply to some given query, which leads to high latency costs particularly in geo-replicated settings. In this work, we introduce the concept of non-uniform replication, where each replica stores only part of the information, but where all replicas store enough information to answer every query. We apply this concept to eventual consistency and conflict-free replicated data types. We show that this model can address useful problems and present two data types that solve such problems. Our evaluation shows that non-uniform replication is more efficient than traditional replication, using less storage space and network bandwidth.
\end{abstract}

\section{Introduction}

Many applications run on cloud infrastructures composed by multiple data centers, 
geographically distributed across the world. These applications usually store their data on 
geo-replicated data stores, with replicas of data being maintained in multiple data centers.
Data management in geo-replicated settings is challenging, requiring designers to make
a number of choices to better address the requirements of applications.

One well-known trade-off is between availability and data consistency.
Some data stores provide strong consistency \cite{spanner,replicatedcommit}, where the system gives 
the illusion that a single replica exists. 
This requires replicas to coordinate for executing operations, with impact on the latency and availability 
of these systems. 
Other data stores~\cite{dynamo,cassandra} provide high-availability and low latency by allowing operations 
to execute locally in a single data center eschewing a linearizable consistency model. These systems receive and
execute updates in a single replica before asynchronously propagating the updates to other replicas, 
thus providing very low latency. 

With the increase of the number of data centers available to applications and the amount of
information maintained by applications, another trade-off is between the simplicity of
maintaining all data in all data centers and the cost of doing so.
Besides sharding data among multiple machines in each data center, it is often interesting 
to keep only part of the data in each data center to reduce the costs associated with data storage
and running protocols that involve a large number of replicas. 
In systems that adopt a partial replication model~\cite{pstore,Terry13Consistency,Tyler15Designing}, 
as each replica only maintains part of the data, it can only locally process a subset of the database queries. 
Thus, when executing a query in a data center, it might be necessary to contact one or more remote 
data centers for computing the result of the query. 

In this paper we explore an alternative partial replication model, the non-uniform replication model, 
where each replica maintains only part of the data but can process all queries. The key insight is 
that for some data objects, not all data is necessary for providing the result of read operations. 
For example, an object that keeps the top-K elements only needs to maintain those top-K elements 
in every replica. However, the remaining elements are necessary if a remove operation is available,
as one of the elements not in the top needs to be promoted when a top element is removed. 

A top-K object could be used for maintaining the leaderboard in an online game.
In such system, while the information for each user only needs to be kept in the data center 
closest to the user (and in one or two more for fault tolerance), it is important to keep a replica
of the leaderboard in every data center for low latency and availability. 
Currently, for supporting such a feature, several designs could be adopted. 
First, the system could maintain an object with the results of all players in all replicas.
While simple, this approach turns out to be needlessly expensive in both storage space and network bandwidth when compared to our proposed model.
Second, the system could move all data to a single data center and execute the computation 
in that data center or use a data processing system that can execute computations over geo-partitioned 
data \cite{pixida}. The result would then have to be sent to all data centers. 
This approach is much more complex than our proposal, and while it might be interesting when complex
machine learning computations are executed, it seems to be an overkill in a number of situations.

We apply the non-uniform replication model to eventual consistency and 
Conflict-free Replicated Data Types~\cite{crdts}, formalizing 
the model for an operation-based replication approach. We present two useful data type designs that
implement such model. 
Our evaluation shows that the non-uniform replication model leads to high gains in both storage space and 
network bandwidth used for synchronization when compared with state-of-the-art replication based 
alternatives.

In summary, this paper makes the following contributions:

\begin{itemize}
	
	\item The proposal of the non-uniform replication model, where each replica only keeps
	part of the data but enough data to reply to every query;
	
	\item The definition of non-uniform eventual consistency (NuEC), the identification
	of sufficient conditions for providing NuEC and a protocol that enforces such conditions
	relying on operation-based synchronization; 
	
	\item Two useful replicated data type designs that adopt the
	non-uniform replication model (and can be generalized to use different filter functions);
	
	\item An evaluation of the proposed model, showing its gains in term of storage
	space and network bandwidth.
	
\end{itemize}

The remainder of this paper is organized as follows. Section \ref{sec:relatedwork} discusses the related work. Section \ref{sec:nu-replicated-systems} describes the non-uniform replication model. Section \ref{sec:nu-eventual-consistency} applies the model to an eventual consistent system. Section \ref{sec:opbased} introduces two useful data type designs that follow the model. Section \ref{sec:evaluations} compares our proposed data types against state-of-the-art CRDTs.

\section{Related Work}
\label{sec:relatedwork}

\textbf{Replication:}
A large number of replication protocols have been proposed in the last decades~\cite{epidemic,eventual,cops,eiger,chainreaction,saito,replicatedcommit}.
Regarding the contents of the replicas, these protocols can be divided in those
providing full replication, where each replica maintains the full database state, 
and partial replication, where each replica maintains only a subset of the database
state.

Full replication strategies allow operations to concurrently modify all replicas 
of a system and, assuming that replicas are mutually consistent, improves 
availability since clients may query any replica in the system and obtain an 
immediate response. While this improves the performance of read operations, 
update operations now negatively affect the performance of the system since 
they must modify every replica which severely affects middle-scale to 
large-scale systems in geo-distributed settings. 
This model also has the disadvantage of limiting the system's total 
capacity to the capacity of the node with fewest resources.

Partial replication~\cite{alonso1997partial,pstore,Terry13Consistency,Tyler15Designing} addresses the shortcomings of full replication by having each replica store only part of the data (which continues being replicated in more than one node). This improves the scalability of the system but since each replica maintains only a part of the data, it can only locally process a subset of queries. This adds complexity to the query processing, with
some queries requiring contacting multiple replicas to compute their result.
In our work we address these limitations by proposing a model where each 
replica maintains only part of the data but can reply to any query.

Despite of adopting full or partial replication, replication protocols enforce strong consistency~\cite{replicatedcommit,spanner,blotter}, weak consistency~\cite{eventual,dynamo,cops,eiger,chainreaction}
or a mix of these consistency models~\cite{psi,redblue}. 
In this paper we show how to combine non-uniform replication with eventual consistency. 
An important aspect in systems that adopt eventual consistency is how the system handles 
concurrent operations.
CRDTs have been proposed as a technique for addressing such challenge.

\textbf{CRDTs:}
Conflict-free Replicated Data Types~\cite{crdts} are data types designed to be
replicated at multiple replicas without requiring coordination for executing operations.
CRDTs encode merge policies used to guarantee that all replicas converge to the same value 
after all updates are propagated to every replica.
This allows an operation to execute immediately on any replica, with replicas synchronizing
asynchronously. Thus, a system that uses CRDTs can provide low latency and high availability,
despite faults and network latency.
With these guarantees, CRDTs are a key building block for providing eventual consistency 
with well defined semantics, making it easier for programmers to reason about the system
evolution.

When considering the synchronization process, two main types of CRDTs have been proposed:
state-based CRDT, where replicas synchronize pairwise, by periodically exchanging the state of 
the replicas; and operation-based CRDTs, where all operations need to be propagated to 
all replicas. 

Delta-based CRDTs~\cite{delta} improve upon state-based CRDTs by 
reducing the dissemination cost of updates, sending only a delta of the modified state. 
This is achieved by using \textsl{delta-mutators}, which are functions that 
encode a delta of the state. 
Linde et. al~\cite{bigdelta} propose an improvement to delta-based CRDTs that further
reduce the data that need to be propagated when a replica first synchronizes with some
other replica. This is particularly interesting in peer-to-peer settings, where the 
synchronization partners of each replica change frequently.
Although delta-based CRDTs reduce the network bandwidth used for synchronization, 
they continue to maintain a full replication strategy where the state of 
quiescent replicas is equivalent.

Computational CRDTs~\cite{Navalho:2015:SCC:2745947.2745948} are an extension of
state-based CRDTs where the state of the object is the result of a
computation (e.g. the average, the top-K elements) over the executed updates.
As with the model we propose in this paper, replicas do not need to
have equivalent states.
The work we present in this paper extends the initial ideas proposed in
computational CRDTs in several aspects, including the definition of the
non-uniform replication model, its application to operation-based eventual
consistency and the new data type designs.

\section{Non-uniform replication}
\label{sec:nu-replicated-systems}

We consider an asynchronous distributed system composed by $n$ nodes.
Without loss of generality, we assume that the system replicates a single
object. 
The object has an interface composed by a set of read-only operations,
$\cal Q$, and a set of update operations, $\cal U$.
Let $\cal S$ be the set of all possible object states, the state that 
results from executing operation $o$ in
state $s \in \cal S$ is denoted as $s \bullet o$.
For a read-only operation, $q \in {\cal Q}$, $s \bullet q = s$.
The result of operation $o \in {\cal Q} \cup {\cal U}$ in
state $s \in {\cal S}$ is denoted as $o(s)$ (we assume that
an update operation, besides modifying the state, can also
return some result). 

We denote the state of the replicated system as a tuple
$(s_1,s_2,\ldots,s_n)$, with $s_i$ the state of the replica $i$.
The state of the replicas is synchronized by a replication protocol
that exchanges messages among the nodes of the system and updates
the state of the replicas.
For now, we do not consider any specific replication protocol or strategy, 
as our proposal can be applied to different replication strategies.

We say a system is in a quiescent state for a given set of executed operations if
the replication protocol has propagated all messages necessary to synchronize 
all replicas, i.e., additional messages sent by the replication protocol will
not modify the state of the replicas.
In general, replication protocols try to achieve a convergence property,
in which the state of any two replicas is equivalent in a quiescent state.

\begin{definition}[Equivalent state]
	Two states, $s_i$ and $s_j$, are \emph{equivalent}, $s_i \equiv s_j$, iff
	the results of the execution of any sequence of operations in both states are equal, i.e.,
	$\forall o_1, \ldots, o_n \in {\cal Q} \cup {\cal U}, 
	o_n(s_i \bullet o_1 \bullet \ldots \bullet o_{n-1}) = o_n(s_j \bullet o_1 \bullet \ldots \bullet o_{n-1})$.
\end{definition}

This property is enforced by most replication protocols, independently of whether they 
provide strong or weak consistency \cite{paxos,cops,eventual}.
We note that this property does not require that the internal state of the replicas is the same, 
but only that the replicas always return the same results for
any executed sequence of operations.

In this work, we propose to relax this property by requiring only 
that the execution of read-only operations return the same value. We name
this property as \emph{observable equivalence} and define it formally as follows.

\begin{definition}[Observable equivalent state]
	Two states, $s_i$ and $s_j$, are \emph{observable equivalent}, $s_i \oequiv s_j$, iff
	the result of executing every read-only operation 
	in both states is equal, i.e., $\forall o \in {\cal Q}, o(s_i) = o(s_j)$.
\end{definition}

As read-only operations do not affect the state of a replica, the results of the
execution of any sequence of read-only operations in two observable equivalent states will 
also be the same.
We now define a non-uniform replication system as one that guarantees only
that replicas converge to an observable equivalent state.

\begin{definition}[Non-uniform replicated system]
	We say that a replicated system is non-uniform if the replication 
	protocol guarantees that in a quiescent state, the state of any two replicas 
	is observable equivalent, i.e., in the quiescent state $(s_1,\ldots,s_n)$, we 
	have $s_i \oequiv s_j, \forall s_i,s_j \in \{s_1,\ldots,s_n\}$.  
\end{definition}

\subsection{Example}\label{sec:nurep:example}

We now give an example that shows the benefit of non-uniform replication.
Consider an object \emph{top-1} with three operations: 
(i) $add(name,value)$, an update operation that adds the pair to the top;
(ii) $rmv(name)$, an update operation that removes all previously added 
	pairs for \emph{name};
(iii) $get()$, a query that returns the pair
	with the largest value (when more than one pair has the same largest value, the one 
	with the smallest lexicographic name is returned).

Consider that $add(a,100)$ is executed in a replica and replicated to all replicas.
Later $add(b,110)$ is executed and replicated. At this moment, all
replicas know both pairs.

If later $add(c,105)$ executes in some replica,
the replication protocol does not need to propagate the update to the other
replicas in a non-uniform replicated system. 
In this case, all replicas are observable equivalent, as a query executed at any replica 
returns the same correct value.
This can have an important impact not only in the size of object replicas, 
as each replica will store only part of the data, but also in the bandwidth used
by the replication protocol, as not all updates need to be propagated to all replicas.

We note that the states that result from the previous execution are not equivalent 
because after executing $rmv(b)$, the get 
operation will return $(c,105)$ in the replica that has received the $add(c,105)$ we operation and
$(b,100)$ in the other replicas. 

Our definition only forces the states to be observable equivalent after the replication
protocol becomes quiescent. 
Different protocols can be devised giving different guarantees.
For example, for providing linearizability, the protocol should guarantee that
all replicas return $(c,105)$ after the remove.
This can be achieved, for example, by replicating the now relevant $(c,105)$ update in
the process of executing the remove. 

In the remainder of this paper, we study how to apply the concept of non-uniform
replication in the context of eventually consistent systems.
The study of its application to systems that provide strong consistency is left 
for future work.

\section{Non-uniform eventual consistency}
\label{sec:nu-eventual-consistency}

We now apply the concept of non-uniform replication to replicated systems
providing eventual consistency.

\subsection{System model}

We consider an asynchronous distributed system composed by $n$ nodes, where nodes may 
exhibit fail-stop faults but not byzantine faults. 
We assume a communication system with a single communication 
primitive, $mcast(m)$, that can be used by a process to send a message
to every other process in the system with reliable broadcast semantics. 
A message sent by a correct process is eventually received by all correct processes.
A message sent by a faulty process is either received by all correct processes
or none. 
Several communication systems provide such properties -- e.g. systems that  
propagate messages reliably using anti-entropy protocols~\cite{epidemic, epidemic2}.

An object is defined as a tuple $({\cal S},s^0,{\cal Q},{\cal U}_p,{\cal U}_e)$, 
where ${\cal S}$ is the set of valid 
states of the object, $s^0 \in {\cal S}$ is the initial state of the object, 
${\cal Q}$ is the set of read-only operations (or \emph{queries}), 
${\cal U}_p$ is the set of prepare-update operations and 
${\cal U}_e$ is the set of effect-update operations.

A query executes only at the replica where the operation is invoked, its source, 
and it has no side-effects, i.e., the state of an object remains unchanged after
executing the operation.
When an application wants to update the state of the object, it issues a
prepare-update operation, $u_p \in {\cal U}_p$. 
A $u_p$ operation executes only at the source, has no side-effects and
generates an effect-update operation, $u_e \in {\cal U}_e$.
At source, $u_e$ executes immediately after $u_p$.

As only effect-update operations may change the state of the object,
for reasoning about the evolution of replicas we can restrict our 
analysis to these operations.
To be precise, the execution of a prepare-update operation generates an instance 
of an effect-update operation. 
For simplicity, we refer the instances of operations simply as operations.
With $O_i$ the set of operations generated at node $i$, the set of operations
generated in an execution, or simply the set of operations in an execution,
is $O = O_1 \cup \ldots \cup O_n$.

\subsection{Non-uniform eventual consistency}

For any given execution, with $O$ the operations of the execution,
we say a replicated system provides \emph{eventual consistency} iff in a quiescent
state: 
(i) every replica executed all operations of $O$; and
(ii) the state of any pair of replicas is equivalent.

A sufficient condition for achieving the first property is to 
propagate all generated operations using reliable broadcast
(and execute any received operation).
A sufficient condition for achieving the second property is to
have only commutative operations. 
Thus, if all operations commute with each other, the execution 
of any serialization of $O$ in the initial state of the object 
leads to an equivalent state.

From now on, unless stated otherwise, we assume that all operations
commute.
In this case, as all serializations of $O$ are equivalent, we denote 
the execution of a serialization of $O$ in state $s$ simply as $s \bullet O$.

For any given execution, with $O$ the operations of the execution,
we say a replicated system provides \emph{non-uniform eventual consistency} 
iff in a quiescent state the state of any replica is observable equivalent to the 
state obtained by executing some serialization of $O$.
As a consequence, the state of any pair of replicas is also observable equivalent.

For a given set of operations in an execution $O$, we say that 
$O_{core} \subseteq O$ is a set of core operations of $O$ iff
$s^0 \bullet O \oequiv s^0 \bullet O_{core}$.
We define the set of operations that are irrelevant to the final state of the
replicas as follows: $O_{masked} \subseteq O$ is a set of masked operations of $O$ iff
$s^0 \bullet O \oequiv s^0 \bullet (O \setminus O_{masked})$.

\begin{theorem}[Sufficient conditions for NuEC]
	A replication system provides \emph{non-uniform eventual consistency (NuEC)}
	if, for a given set of operations $O$, the following conditions hold:
	(i) every replica executes a set of core operations of $O$; and 
	(ii) all operations commute.
\end{theorem}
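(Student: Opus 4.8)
The plan is to unfold the definition of non-uniform eventual consistency and reduce it, via the two hypotheses, to the defining property of a set of core operations. I would fix an execution with operation set $O$ together with a resulting quiescent state $(s_1,\ldots,s_n)$. By the definition of NuEC it suffices to show that for every replica $i$ we have $s_i \oequiv s^0 \bullet O$, i.e. that $s_i$ is observable equivalent to the state produced by some (equivalently, any) serialization of $O$.

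First I would pin down the state of each replica. Let $C_i \subseteq O$ be the set of operations that replica $i$ has executed once the system is quiescent; these are exactly the effect-update operations delivered to $i$, so $s_i$ is the result of executing some serialization of $C_i$ starting from $s^0$. Here condition (ii) does the essential work: because all operations commute, every serialization of $C_i$ yields an equivalent state, so $s_i$ is well-defined up to $\equiv$ and I may write $s_i = s^0 \bullet C_i$ without ambiguity. The same remark applied to the full set makes $s^0 \bullet O$ well-defined.

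Next I would invoke condition (i). By hypothesis the set $C_i$ executed at replica $i$ is a set of core operations of $O$, which by definition means precisely $s^0 \bullet O \oequiv s^0 \bullet C_i$. Combining this with the previous paragraph gives $s_i = s^0 \bullet C_i \oequiv s^0 \bullet O$ for every $i$. Since this holds for all replicas and $s^0 \bullet O$ is a serialization of $O$, the quiescent state satisfies the NuEC condition; observable equivalence between any pair of replicas then follows by transitivity of $\oequiv$.

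The only point requiring care is the interplay between the two equivalence relations, and I would make it explicit at the outset: equality of results for every sequence of operations trivially implies equality for every single read-only operation, so $\equiv$ implies $\oequiv$, and $\oequiv$ is itself transitive because it is defined by equality of query results. This lets the fully-equivalent states coming from commutativity be chained freely with the merely observable-equivalent states coming from the core-operations property. I expect the main obstacle to be purely definitional bookkeeping---justifying that a quiescent replica's state is captured by $s^0 \bullet C_i$ for a well-defined executed set $C_i$---rather than any substantive inequality or construction.
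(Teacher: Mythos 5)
Your argument is correct and follows the same route as the paper's own (very terse) proof: commutativity makes the state of each replica well-defined as $s^0 \bullet C_i$ for its executed set $C_i$, and the definition of a core set then gives $s^0 \bullet C_i \oequiv s^0 \bullet O$, with pairwise observable equivalence by transitivity. You simply spell out the definitional bookkeeping that the paper compresses into ``it follows immediately.''
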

\begin{proof}
	From the definition of core operations of $O$, and by the fact that all operations
	commute, it follows immediately that if a replica executes a set of core operations,
	then the final state of the replica is observable equivalent to the state obtained by
	executing a serialization of $O$.
	Additionally, any replica reaches an observable equivalent state.
\end{proof}

\subsection{Protocol for non-uniform eventual consistency}

We now build on the sufficient conditions for providing \emph{non-uniform 
	eventual consistency} to devise a correct replication protocol that tries
to minimize the operations propagated to other replicas.
The key idea is to avoid propagating operations that are part of 
a masked set. 
The challenge is to achieve this by using only local information, which 
includes only a subset of the executed operations.

\begin{algorithm}[ht]
	\caption{Replication algorithm for non-uniform eventual consistency}\label{replication-alg}
	\footnotesize
	\begin{algorithmic}[1]
		\State $S$ : state: initial $s^0$ \Comment{Object state}
		\State $log_{recv}$ : set of operations: initial $\{\}$
		\State $log_{local}$ : set of operations: initial $\{\}$ \Comment{Local operations not propagated}
		\\
		\State \function{execOp}($op$): void \Comment{New operation generated locally} \label{m:execOp}
		\State \tab $log_{local} = log_{local} \cup \{op\}$
		\State \tab $S = S \bullet op$
		\\
		\State \function{opsToPropagate}(): set of operations \Comment{Computes the local operations that need to be propagated}
		\State \tab $ops = maskedForever( log_{local}, S, log_{recv})$
		\State \tab $log_{local} = log_{local} \setminus ops$
		\State \tab $opsImpact = hasObservableImpact( log_{local}, S, log_{recv})$\label{m:opsimapct}
		\State \tab $opsPotImpact = mayHaveObservableImpact( log_{local}, S, log_{recv})$\label{m:opspotimapct}
		\State \tab \textbf{return} $opsImpact \cup opsPotImpact$
		\\
		\State \function{sync}(): void \Comment{Propagates local operations to remote replicas}
		\State \tab $ops = opsToPropagate()$
		\State \tab $compactedOps = compact(ops)$ \Comment{Compacts the set of operations}
		\State \tab $mcast(compactedOps)$
		\State \tab $log_{coreLocal} = \{\}$
		\State \tab $log_{local} = log_{local} \setminus ops$
		\State \tab $log_{recv} = log_{recv} \cup ops$
		\\
		\State \function{on receive}($ops$): void \Comment{Process remote operations}\label{m:onreceive}
		\State \tab $log_{recv} = log_{recv} \cup ops$
		\State \tab $S = S \bullet ops$
	\end{algorithmic}
\end{algorithm}

Algorithm~\ref{replication-alg} presents the pseudo-code of an algorithm 
for achieving \emph{non-uniform eventual consistency} -- the algorithm does
not address the durability of operations, which will be discussed later.

The algorithm maintains the state of the object and two sets of operations:
$log_{local}$, the set of effect-update operations generated in the local
replica and not yet propagated to other replicas;
$log_{recv}$, the set of effect-update operations propagated to all replicas
(including operations generated locally and remotely).

When an effect-update operation is generated, the \emph{execOp} function 
is called. This function adds the new operation to the log of local
operations and updates the local object state. 

The function \emph{sync} is called to propagate local operations to remote
replicas. It starts by computing which new operations need to be propagated, 
compacts the resulting set of operations for efficiency purposes, 
multicasts the compacted set of operations, and finally updates the local sets of operations.
When a replica receives a set of operations (line~\ref{m:onreceive}), the
set of operations propagated to all nodes and the local object state are updated accordingly.

Function \emph{opsToPropagate} addresses the key challenge of deciding which
operations need to be propagated to other replicas.
To this end, we divide the operations in four groups.

First, the \emph{forever masked} operations, which are operations that will
remain in the set of masked operations independently of the operations that might
be executed in the future. 
In the top example, an operation that adds a pair masks forever all known operations
that added a pair for the same element with a lower value.
These operations are removed from the set of local operations.

Second, the \emph{core} operations ($opsImpact$, line \ref{m:opsimapct}), as computed locally.
These operations need to be propagated, as they will (typically) impact
the observable state at every replica.

Third, the operations that might impact the observable state when considered
in combination with other non-core operations that might have been executed in other 
replicas ($opsPotImpact$, line \ref{m:opspotimapct}).
As there is no way to know which non-core operations have been executed in other replicas, 
it is necessary to propagate these operations also.
For example, consider a modified top object where the value associated with each element is the sum 
of the values of the pairs added to the object. In this case, 
an add operation that would not move an element to the top in a replica would be in this 
category because it could influence the top when combined with other concurrent adds for the
same element.

Fourth, the remaining operations that might impact the observable state in 
the future, depending on the evolution of the observable state.
These operations remain in $log_{local}$.
In the original top example, an operation that adds a pair that will not be in 
the top, as computed locally, is in this category as it might become the top element
after removing the elements with larger values.

For proving that the algorithm can be used to provide non-uniform eventual 
consistency, we need to prove the following property.

\begin{theorem}
	\label{theorem-algorithm}
	Algorithm~\ref{replication-alg} guarantees that in a quiescent state, considering
	all operations $O$ in an execution, all replicas have received all operations in
	a core set $O_{core}$. 
\end{theorem}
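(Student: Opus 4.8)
The plan is to show that the set of operations that each replica ultimately propagates (and hence that every replica receives, by reliable broadcast) contains a core set of $O$. The argument is essentially an invariant-maintenance proof over the execution of the algorithm, combined with the observation that in a quiescent state nothing remains to be sent. I would structure it around the partition of operations induced by \texttt{opsToPropagate}: the \emph{forever masked} operations (discarded), the operations with observable impact ($opsImpact$), the operations with potential observable impact ($opsPotImpact$), and the residual operations retained in $log_{local}$.

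\medskip

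\noindent\textbf{Key steps.} First I would fix notation: let $R$ denote the union, over all replicas and all invocations of \texttt{sync}, of the operations that are multicast; by the reliable-broadcast semantics, in a quiescent state every replica has received and executed exactly $R$ together with its own not-yet-masked local operations, and since quiescence means no further \texttt{sync} changes any state, the retained $log_{local}$ at each replica must consist only of operations whose propagation would not alter any replica's observable state. Second, I would prove the central claim $s^0 \bullet R \oequiv s^0 \bullet O$, i.e. that $R$ is a core set. The $\oequiv$ direction that needs work is showing that the operations \emph{not} propagated are collectively masked: an operation is dropped only if it is classified as forever masked or is left in $log_{local}$ as having (currently) no observable impact, and by the definitions of $maskedForever$ and $hasObservableImpact$/$mayHaveObservableImpact$ each such operation $o$ satisfies $s^0 \bullet O \oequiv s^0 \bullet (O \setminus \{o\})$ relative to the state against which it was evaluated. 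Third, using commutativity (assumed globally) I would argue these individual masking facts compose, so that removing the whole un-propagated set preserves observable equivalence, yielding $s^0 \bullet R \oequiv s^0 \bullet O$ and hence that $R \supseteq O_{core}$ for the core set $R$ itself. Finally, since every replica receives all of $R$, condition (i) of the NuEC theorem is met.

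\medskip

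\noindent\textbf{Main obstacle.} The hard part will be the compositionality step: the predicates $maskedForever$, $hasObservableImpact$ and $mayHaveObservableImpact$ are evaluated \emph{locally}, against a replica's partial view $(log_{local}, S, log_{recv})$, not against the global $O$. An operation judged locally to have no observable impact might, in principle, interact with concurrent operations at other replicas and become relevant — this is precisely why the $opsPotImpact$ category exists. So the crux is to argue that the classification is \emph{sound}: any operation retained in $log_{local}$ (and thus never propagated at quiescence) is genuinely masked with respect to the full $O$, because $maskedForever$ captures the ``masked regardless of future'' guarantee and $mayHaveObservableImpact$ conservatively forces propagation of anything that could interact with unseen concurrent operations. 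I expect this to require making precise the contract that these three filter functions must satisfy — effectively a specification that an operation is kept back only when it is masked in every extension consistent with the local view — and then discharging the theorem by showing the algorithm's partition respects that contract under commutativity. Establishing that the residual $log_{local}$ at quiescence contributes nothing observable, rather than merely nothing locally observable, is the step I would scrutinize most carefully.
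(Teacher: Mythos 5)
Your proposal takes essentially the same route as the paper's own proof: a case analysis over the four categories induced by \emph{opsToPropagate}, showing that the forever-masked operations and the operations retained in $log_{local}$ at quiescence together form a masked set, so that the propagated operations constitute a core set received by every replica. You are in fact more careful than the paper, which simply asserts that the retained operations ``do not influence the observable state when considering all executed operations -- if they might have impact, they would be in the third category''; the soundness contract on the locally-evaluated filter functions that you single out as the crux is exactly what that assertion presupposes, and the paper leaves it implicit rather than discharging it.
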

\begin{proof}
	To prove this property, we need to prove that there exists no operation that
	has not been propagated by some replica and that is required for any $O_{core}$
	set.
	Operations in the first category have been identified as masked operations
	independently of any other operations that might have been or will be executed.
	Thus, by definition of masked operations, a $O_{core}$ set will not (need to) include
	these operations.
	The fourth category includes operations that do not influence the observable 
	state when considering all executed operations -- if they might have impact,
	they would be in the third category. Thus, these operations do not need 
	to be in a $O_{core}$ set.
	All other operations are propagated to all replicas. Thus, in a quiescent state,
	every replica has received all operations that impact the observable state.
\end{proof}

\subsection{Fault-tolerance}\label{sec:nuec:ft}

Non-uniform replication aims at reducing the cost of communication 
and the size of replicas, by avoiding propagating operations that do
not influence the observable state of the object. 
This raises the question of the durability of operations that are not 
immediately propagated to all replicas.

One way to solve this problem is to have the source replica propagating
every local operation to $f$ more replicas to tolerate $f$ faults. 
This ensures that an operation survives even in the case of $f$ faults.
We note that it would be necessary to adapt the proposed algorithm, so that 
in the case a replica receives an operation for durability reasons, it 
would propagate the operation to other replicas if the source replica fails.
This can be achieved by considering it as any local operation (and introducing 
a mechanism to filter duplicate reception of operations).

\subsection{Causal consistency}

Causal consistency is a popular consistency model for replicated 
systems \cite{cops,chainreaction,eiger}, in which a replica only executes 
an operation after executing all operations that causally precede it~\cite{happensbefore}.
In the non-uniform replication model, it is impossible to strictly adhere to 
this definition because some operations are not propagated (immediately), which would
prevent all later operations from executing.
 
An alternative would be to restrict the dependencies to the execution of core
operations. The problem with this is that the status of an operation may change
by the execution of another operation. When a non-core operation becomes core, 
a number of dependencies that should have been enforced might have been missed
in some replicas.

We argue that the main interest of causal consistency, when compared 
with eventual consistency, lies in the semantics provided by the object.
Thus, in the designs that we present in the next section, we aim to guarantee that
in a quiescent state, the state of the replicated objects
provide equivalent semantics to that of a system that enforces causal consistency.

\section{Non-uniform operation-based CRDTs}
\label{sec:opbased}

CRDTs \cite{crdts} are data-types that can be replicated, modified
concurrently without coordination and guarantee the eventual consistency of
replicas given that all updates propagate to all replicas.
We now present the design of two useful operation-based CRDTs \cite{crdts} 
that adopt the non-uniform replication model. 
Unlike most operation-based CRDT designs, we do not assume that the system propagates
operations in a causal order. 
These designs were inspired by the state-based computational CRDTs proposed by 
Navalho \textit{et al.} \cite{Navalho:2015:SCC:2745947.2745948}, which also allow
replicas to diverge in their quiescent state.

\subsection{Top-K with removals NuCRDT}
\label{sec:topkrmv}

In this section we present the design of a non-uniform top-K CRDT, as the one
introduced in section~\ref{sec:nurep:example}. The data type allows access to the top-K elements added and can be used, for example, for 
maintaining the leaderboard in online games. 
The proposed design could be adapted to define any CRDT that filters elements
based on a deterministic function by replacing the \emph{topK} function used in the algorithm by another filter function.  

For defining the semantics of our data type, we start by defining the happens-before
relation among operations.
To this end, we start by considering the happens-before relation established 
among the events in the execution of the replicated system \cite{happensbefore}. 
The events that are considered relevant are: the generation of an operation at the source replica,
and the dispatch and reception of a message with a new operation or information that
no new message exists.
We say that operation $op_i$ happens before operation $op_j$ iff 
the generation of $op_i$ happened before the generation of $op_j$ in the
partial order of events.

The semantics of the operations defined in the top-K CRDT is the following.
The \emph{add(el,val)} operation adds a new pair to the object.
The \emph{rmv(el)} operation removes any pair of $el$ that was
added by an operation that happened-before
the \emph{rmv} (note that this includes non-core add operations that have not been propagated to the 
source replica of the remove).
This leads to an \emph{add-wins} policy \cite{crdts}, where a remove has no impact on concurrent
adds. 
The \emph{get()} operation returns the top-K pairs in the object, as defined by the function 
\emph{topK} used in the algorithm. 

\begin{algorithm}[ht]
	\caption{Top-K NuCRDT with removals}\label{topk-with-deletes}
	\footnotesize	
	\begin{algorithmic}[1]
		\State $\emph{elems}$ : set of $\langle \emph{id}, \emph{score}, \emph{ts} \rangle$ : initial $\{\}$
		\State $\emph{removes}$ : map $\emph{id} \mapsto \emph{vectorClock}$: initial $[]$
		\State $\emph{vc}$ : $\emph{vectorClock}$: initial $[]$
		\\
		\State \function{get}$()$ : set
		\State \tab \textbf{return} $\{\langle \emph{id}, \emph{score} \rangle : \langle \emph{id}, \emph{score}, \emph{ts} \rangle \in \emph{topK(elems)}\}$
		\\
		\State \textbf{prepare} \function{add}$(\emph{id}, \emph{score})$
		\State \tab \textbf{generate} $\emph{add(id}, \emph{score}, \langle\emph{getReplicaId()},++\emph{vc}[\emph{getReplicaId()}]\rangle\emph{)}$
		\\
		\State \textbf{effect} \function{add}$(\emph{id}, \emph{score}, \emph{ts})$
		\State \tab \textbf{if} $\emph{removes}[\emph{id}][\emph{ts.siteId}] < \emph{ts.val}$ \textbf{then}\label{alg:topk:wasdeleted}
		\State \tab \tab $\emph{elems} = \emph{elems} \cup \{\langle \emph{id}, \emph{score}, \emph{ts} \rangle\}$
		\State \tab \tab $\emph{vc}[\emph{ts.siteId}] = \emph{max(vc}[\emph{ts.siteId}], \emph{ts.val)}$
		\\
		\State \textbf{prepare} \function{rmv}$(\emph{id})$
		\State \tab \textbf{generate} $\emph{rmv(id,vc)}$
		\\
		\State \textbf{effect} \function{rmv}$(\emph{id}, \emph{vc}_\emph{{rmv}})$
		\State \tab $\emph{removes}[\emph{id}] = \emph{pointwiseMax}(\emph{removes}[\emph{id}], \emph{vc}_\emph{{rmv}})$
		\State \tab $\emph{elems} = \emph{elems} \setminus \{\langle \emph{id}_0, \emph{score}, \emph{ts} \rangle \in \emph{elem}: \emph{id} = \emph{id}_0 \wedge \emph{ts.val} \leq \emph{vc}_\emph{{rmv}}[\emph{ts.siteId}] \}$ \label{alg:topk:todelete}
		\\
		\State \function{maskedForever}$(\emph{log}_\emph{{local}}, S, \emph{log}_\emph{{recv}})$: set of operations
		\State \tab $\emph{adds} = \{\emph{add(id}_1, \emph{score}_1, \emph{ts}_1\emph{)} \in \emph{log}_\emph{{local}} :$
		\State \tab\tab $(\exists \emph{add(id}_2, \emph{score}_2, \emph{ts}_2\emph{)} \in \emph{log}_\emph{{local}} : \emph{id}_1 = \emph{id}_2 \land \emph{score}_1 < \emph{score}_2 \land \emph{ts}_1.\emph{val} < \emph{ts}_2.\emph{val}) \vee $
		\State \tab\tab\tab $(\exists \emph{rmv(id}_3, \emph{vc}_\emph{{rmv}}\emph{)} \in (\emph{log}_\emph{{recv}} \cup \emph{log}_\emph{{local}}) : \emph{id}_1 = \emph{id}_3 \land \emph{ts}_1.\emph{val} \leq \emph{vc}_\emph{{rmv}}[\emph{ts}_1.\emph{siteId}]\}$
		\State \tab $\emph{rmvs} = \{\emph{rmv(id}_1, \emph{vc}_1\emph{)} \in \emph{log}_\emph{{local}}:$ $\exists \emph{rmv(id}_2, \emph{vc}_2\emph{)} \in (\emph{log}_\emph{{local}} \cup \emph{log}_\emph{{recv}}) : \emph{id}_1 = \emph{id}_2 \wedge \emph{vc}_1 < \emph{vc}_2\}$
		\State \tab \textbf{return} $\emph{adds} \cup \emph{rmvs}$
		\\
		\State \function{mayHaveObservableImpact}$(\emph{log}_\emph{{local}}, S, \emph{log}_\emph{{recv}})$: set of operations
		\State \tab \textbf{return} $\{\}$ \Comment{This case never happens for this data type}
		\\
		\State \function{hasObservableImpact}$(\emph{log}_\emph{{local}}, S, \emph{log}_\emph{{recv}})$: set of operations
		\State \tab $\emph{adds} = \{\emph{add(id}_1, \emph{score}_1, \emph{ts}_1\emph{)} \in \emph{log}_\emph{{local}} : \langle \emph{id}_1, \emph{score}_1, \emph{ts}_1 \rangle \in \emph{topK(}S.\emph{elems)}\}$
		\State \tab $\emph{rmvs} = \{\emph{rmv(id}_1, \emph{vc}_1\emph{)} \in \emph{log}_\emph{{local}} : (\exists \emph{add(id}_2, \emph{score}_2, \emph{ts}_2\emph{)} \in (\emph{log}_\emph{{local}} \cup \emph{log}_\emph{{recv}}):$
		\State \tab\tab\tab $\langle \emph{id}_2, \emph{score}_2, \emph{ts}_2 \rangle \in \emph{topK(}S.\emph{elems}\cup \{\langle \emph{id}_2, \emph{score}_2, \emph{ts}_2 \rangle\}\emph{)} \wedge \emph{id}_1 = \emph{id}_2 \wedge \emph{ts}_2.\emph{val} \leq \emph{vc}_1[\emph{ts}_2.\emph{siteId}])\}$
		\State \tab \textbf{return} $\emph{adds} \cup \emph{rmvs}$
		\\
		\State \function{compact}$(ops)$: set of operations
		\State \tab \textbf{return} $ops$ \Comment{This data type does not require compaction}
	\end{algorithmic}
\end{algorithm}

Algorithm \ref{topk-with-deletes} presents a design that implements this semantics.
The prepare-update \emph{add} operation generates an effect-update \emph{add} that has an additional 
parameter consisting in a timestamp $(replica id, val)$, with $val$ a monotonically
increasing integer.
The prepare-update \emph{rmv} operation generates an effect-update \emph{rmv} that includes
an additional parameter consisting in a vector clock that summarizes add operations
that happened before the remove operation.
To this end, the object maintains a vector clock that is updated when a new add
is generated or executed locally. 
Additionally, this vector clock should be updated whenever a replica receives a message
from a remote replica (to summarize also the adds known in the sender that have not
been propagated to this replica).

Besides this vector clock, $vc$, each object replica maintains:
(i) a set, $elems$, with the elements added by all
	\emph{add} operations known locally (and that have not been removed yet); and
(ii) a map, $removes$, that maps each element $id$ to a vector clock with a summary of
	the add operations that happened before all removes of $id$ (for simplifying the 
	presentation of the algorithm, we assume that a key absent from the map has associated 
	a default vector clock consisting of zeros for every replica).

The execution of an \emph{add} consists in adding the element to the set
of $elems$ if the add has not happened before a previously received
remove for the same element -- this can happen as operations are not 
necessarily propagated in causal order.
The execution of a \emph{rmv} consists in updating $removes$ and deleting  
from $elems$ the information for adds of the element that happened before 
the remove.
To verify if an add has happened before a remove, we check if
the timestamp associated with the add is reflected in the remove vector clock
of the element (lines \ref{alg:topk:wasdeleted} and \ref{alg:topk:todelete}).
This ensures the intended semantics for the CRDT, assuming that the functions used
by the protocol are correct.

We now analyze the code of these functions.

Function \function{maskedForever} computes: the local adds that become
masked by other local adds (those for the same element with a lower
value) and removes (those for the same element that happened before the remove); 
the local removes that become masked by other removes (those for the same element that have
a smaller vector clock). 
In the latter case, it is immediate that a remove with a smaller vector clock becomes
irrelevant after executing the one with a larger vector clock.
In the former case, a local add for an element is masked by a more recent local add 
for the same element but with a larger value
as it is not possible to remove only the effects of the later add without removing the 
effect of the older one. 
A local add also becomes permanently masked by a local or remote remove that happened after 
the add.

Function \function{mayHaveObservableImpact} returns the empty set, as
for having impact on any observable state, an operation also has to have impact
on the local observable state by itself.

Function \function{hasObservableImpact} computes the local operations that 
are relevant for computing the top-K. An add is relevant if the added value is
in the top; a remove is relevant if it removes an add that would be otherwise in 
the top.

\subsection{Top Sum NuCRDT}
\label{sec:topsum}

We now present the design of a non-uniform CRDT, Top Sum, that maintains the 
top-K elements added to the object, where the value of each element is the sum 
of the values added for the element.
This data type can be used for maintaining a leaderboard in an online game where 
every time a player completes some challenge it is awarded some number of points, 
with the current score of the player being the sum of all points awarded.
It could also be used for maintaining a top of the best selling products 
in an (online) store (or the top customers, etc).

The semantics of the operations defined in the Top Sum object is the following.
The \emph{add(id, n)} update operation increments the value associated with \emph{id} by \emph{n}.
The \emph{get()} read-only operation returns the top-K mappings, $id \rightarrow value$,
as defined by the \emph{topK} function (similar to the Top-K NuCRDT).

This design is challenging, as it is hard to know which operations may have impact 
in the observable state. For example, consider a scenario with two replicas, where 
the value of the last element in the top is 100. 
If the known score of an element is 90, an add of 5 received in one replica may have 
impact in the observable state if the other replica has also received an add of 5 or more. 
One approach would be to propagate these operations, but this would lead to propagating 
all operations. 

To try to minimize the number of operations propagated we use the following heuristic 
inspired by the demarcation protocol and escrow transactions~\cite{demarcation,escrow}.
For each id that does not belong to the top, we compute the difference between the smallest 
value in the top and the value of the id computed by operations known in every replica -- this
is how much must be added to the id to make it to the top: let $d$ be this value.
If the sum of local adds for the id does not exceed $\frac{d}{num. replicas}$ in any replica, the value
of id when considering adds executed in all replicas is smaller than the smallest element in the top. 
Thus, it is not necessary to propagate add operations in this case, as they will not affect the top.

\begin{algorithm}[ht]
	\caption{Top Sum NuCRDT}\label{nu:topsum}
	\footnotesize	
	\begin{algorithmic}[1]
		\State $\emph{state}$ : map $\emph{id} \mapsto \emph{sum}$: initial $[]$
		\\
		\State \function{get}$()$ : map
		\State \tab \textbf{return} $\emph{topK(state)}$
		\\
		\State \textbf{prepare} \function{add}$(\emph{id}, \emph{n})$
		\State \tab \textbf{generate} \emph{add(id, n)}
		\\
		\State \textbf{effect} \function{add}$(\emph{id}, \emph{n})$
		\State \tab $\emph{state}[\emph{id}] = \emph{state}[\emph{id}] + \emph{n}$
		\\
		\State \function{maskedForever}$(\emph{log}_\emph{{local}}, S, \emph{log}_\emph{{recv}})$: set of operations
		\State \tab \textbf{return} $\{\}$ \Comment{This case never happens for this data type}
		\\
		\State \function{mayHaveObservableImpact}$(\emph{log}_\emph{{local}}, S, \emph{log}_\emph{{recv}})$: set of operations
		\State \tab $\emph{top} = \emph{topK(S.state)}$
		\State \tab $\emph{adds} = \{\emph{add(id, }\_\emph{)} \in \emph{log}_\emph{{local}} : \emph{s} = \emph{sum}_{\emph{val}}(\{\emph{add(i, n)} \in \emph{log}_\emph{{local}} : \emph{i} = \emph{id}\})$
		\State \tab\tab $\land$ $\emph{s} \geq ((\emph{min(sum(top))} - (S.state[id] - s))$ $/$ $\emph{getNumReplicas()})\}$
		\State \tab \textbf{return} \emph{adds}
		\\
		\State \function{hasObservableImpact}$(\emph{log}_\emph{{local}}, S, \emph{log}_\emph{{recv}})$: set of operations
		\State \tab $\emph{top} = \emph{topK(S.state)}$
		\State \tab $\emph{adds} = \{\emph{add(id, \_)} \in \emph{log}_\emph{{local}} : \emph{id} \in ids(top)\}$
		\State \tab \textbf{return} \emph{adds}
		\\
		\State \function{compact}$(\emph{ops})$: set of operations
		\State \tab $\emph{adds} = \{\emph{add(id, n)} : \emph{id} \in \{i : \emph{add(i, \_)} \in \emph{ops}\} \land$ $\emph{n} = \emph{sum}(\{\emph{k} : \emph{add(id}_1\emph{, k)} \in \emph{ops} : \emph{id}_1 = \emph{id}\})\}$
		\State \tab \textbf{return} \emph{adds}
	\end{algorithmic}
\end{algorithm}

Algorithm~\ref{nu:topsum} presents a design that implements this approach. 
The state of the object is a single variable, \emph{state}, that maps identifiers to their 
current values. 
The only prepare-update operation, \emph{add}, generates an effect-update \emph{add} 
with the same parameters. The execution of an effect-update \emph{add(id, n)} simply 
increments the value of \emph{id} by \emph{n}.

Function \function{maskedForever} returns the empty set, as operations in this design 
can never be forever masked. 

Function \function{mayHaveObservableImpact} computes the set of \emph{add} operations 
that can potentially have an impact on the observable state, using the approach previously explained.

Function \function{hasObservableImpact} computes the set of \emph{add} operations that 
have their corresponding \emph{id} present in the top-K. This guarantees that the values 
of the elements in the top are kept up-to-date, reflecting all executed operations.

Function \function{compact} takes a set of \emph{add} operations and compacts 
the \emph{add} operations that affect the same identifier into a single operation.
This reduces the size of the messages sent through the network and is similar to
the optimization obtained in delta-based CRDTs~\cite{delta}.

\subsection{Discussion}

The goal of non-uniform replication is to allow replicas to store less data and 
use less bandwidth for replica synchronization.
Although it is clear that non-uniform replication cannot be useful for all data, 
we believe that the number of use cases is large enough for making non-uniform 
replication interesting in practice. 
We now discuss two classes of data types that can benefit from the adoption of 
non-uniform replication.

The first class is that of data types for which the result of queries include only 
a subset of the data in the object.
In this case two different situations may occur: 
(i) it is possible to compute locally, without additional information, if some
operation is relevant (and needs to be propagated to all replicas);
(ii) it is necessary to have additional information to be able to decide if
some operation is relevant.
The Top-K CRDT presented in section~\ref{sec:topkrmv} is an example of the former. 
Another example includes a data type that returns a subset of the elements added
based on a (modifiable) user-defined filter -- e.g. in a set of books, the filter could 
select the books of a given genre, language, etc. 
The Top-Sum CRDT presented in section~\ref{sec:topsum} is an example of the latter.
Another example includes a data type that returns the $50^{th}$ percentile (or others)
for the elements added -- in this case, it is only necessary to replicate the elements 
in a range close to the $50^{th}$ percentile and replicate statistics of the elements
smaller and larger than the range of replicated elements.

In all these examples, the effects of an operation that in a given moment do not influence
the result of the available queries may become relevant after other operations are 
executed -- in the Top-K with removes 
due to a remove of an element in the top; in the filtered set due to a change in the filter;
in the Top-Sum due to a new add that makes an element relevant; and in the
percentile due to the insertion of elements that make the $50^{th}$ percentile change.
We note that if the relevance of an operation cannot change over time, the non-uniform CRDT
would be similar to an optimized CRDT that discard operations that are not relevant before 
propagating them to other replicas.

A second class is that of data types with queries that return the result of an aggregation 
over the data added to the object.
An example of this second class is the Histogram CRDT presented in the appendix.
This data type only needs to keep a count for each element. 
A possible use of this data type would be for maintaining the summary of classifications
given by users in an online shop.
Similar approaches could be implemented for data types that return the result
of other aggregation functions that can be incrementally 
computed \cite{Navalho:2015:SCC:2745947.2745948}.   

A data type that supports, besides adding some information, an operation for removing
that information would be more complex to implement. 
For example, in an Histogram CRDT that supports removing a previously added 
element, it would be necessary that concurrently removing the same element would not 
result in an incorrect aggregation result. 
Implementing such CRDT would require detecting and fixing these cases.

\section{Evaluation}
\label{sec:evaluations}

In this section we evaluate our data types that follow the non-uniform replication model. 
To this end, we compare our designs against state-of-the-art CRDT alternatives: 
delta-based CRDTs \cite{delta} that maintain full object replicas 
efficiently by propagating updates as deltas of the state; 
and computational CRDTs \cite{Navalho:2015:SCC:2745947.2745948} that maintain
non-uniform replicas using a state-based approach.

Our evaluation is performed by simulation, using a discrete event simulator. 
To show the benefit in terms of bandwidth and storage, we measure the total size of messages 
sent between replicas for synchronization (total payload) and the average size of replicas.

We simulate a system with 5 replicas for each object. Both our designs and the computational CRDTs 
support up to 2 replica faults by propagating all operations to, at least, 2 other replicas besides 
the source replica. We note that this limits the improvement that our approach could achieve,
as it is only possible to avoid sending an operation to two of the five replicas.
By either increasing the number of replicas or reducing the fault tolerance level, we could
expect that our approach would perform comparatively better than the delta-based CRDTs.

\subsection{Top-K with removals}

We begin by comparing our Top-K design (\emph{NuCRDT}) with a delta-based CRDT set~\cite{delta} (\emph{Delta CRDT}) 
and the top-K state-based computational CRDT design~\cite{Navalho:2015:SCC:2745947.2745948} (\emph{CCRDT}).

The top-K was configured with K equal to 100. In each run, 500000 update operations were generated for 10000 Ids and with scores up to 250000. The values used in each operation were randomly selected using a uniform distribution. A replica synchronizes after executing 100 events.

Given the expected usage of top-K for supporting a leaderboard, we expect the remove to be an infrequent operation (to be used only when a user is removed from the game). Figures~\ref{fig:topk1} and \ref{fig:topk3} show the results for workloads with $5\%$ and $0.05\%$ of removes respectively (the other operations are adds). 

\begin{figure*}[ht]
	\centering
	\includegraphics[width=0.44\linewidth]{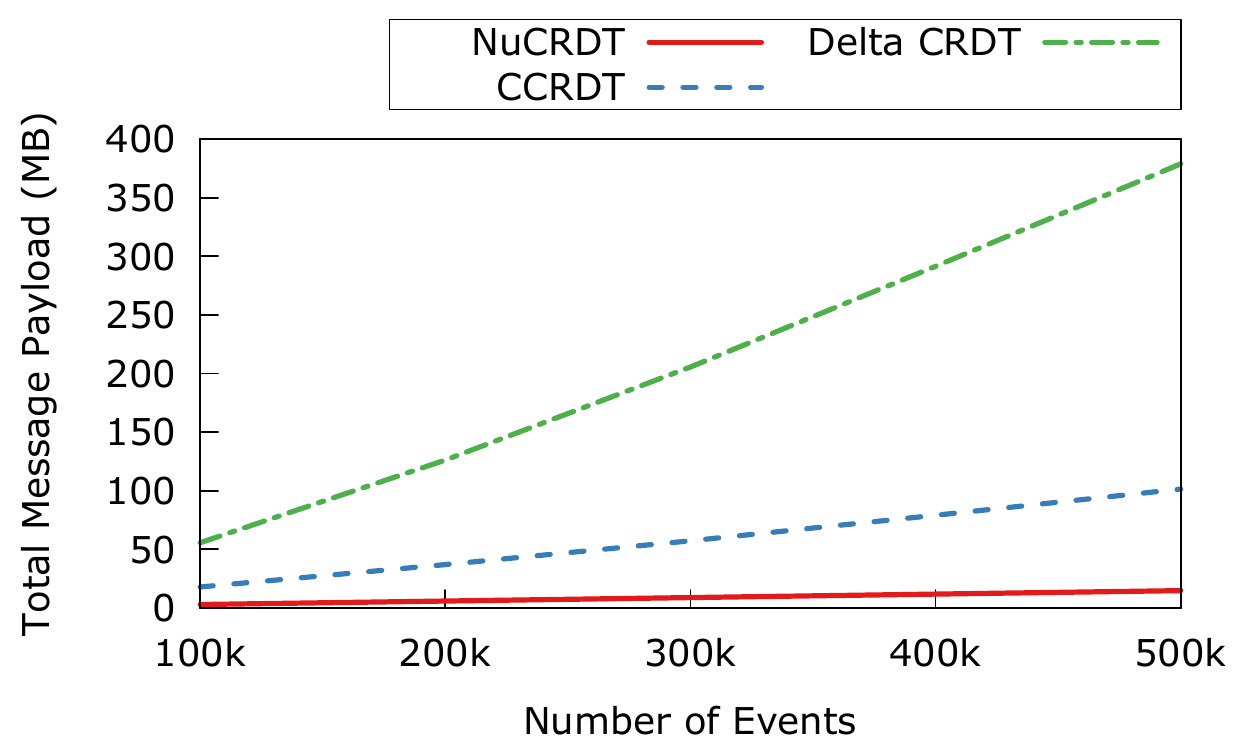}\label{fig:topk2}\hspace{0.08\linewidth}
	\includegraphics[width=0.44\linewidth]{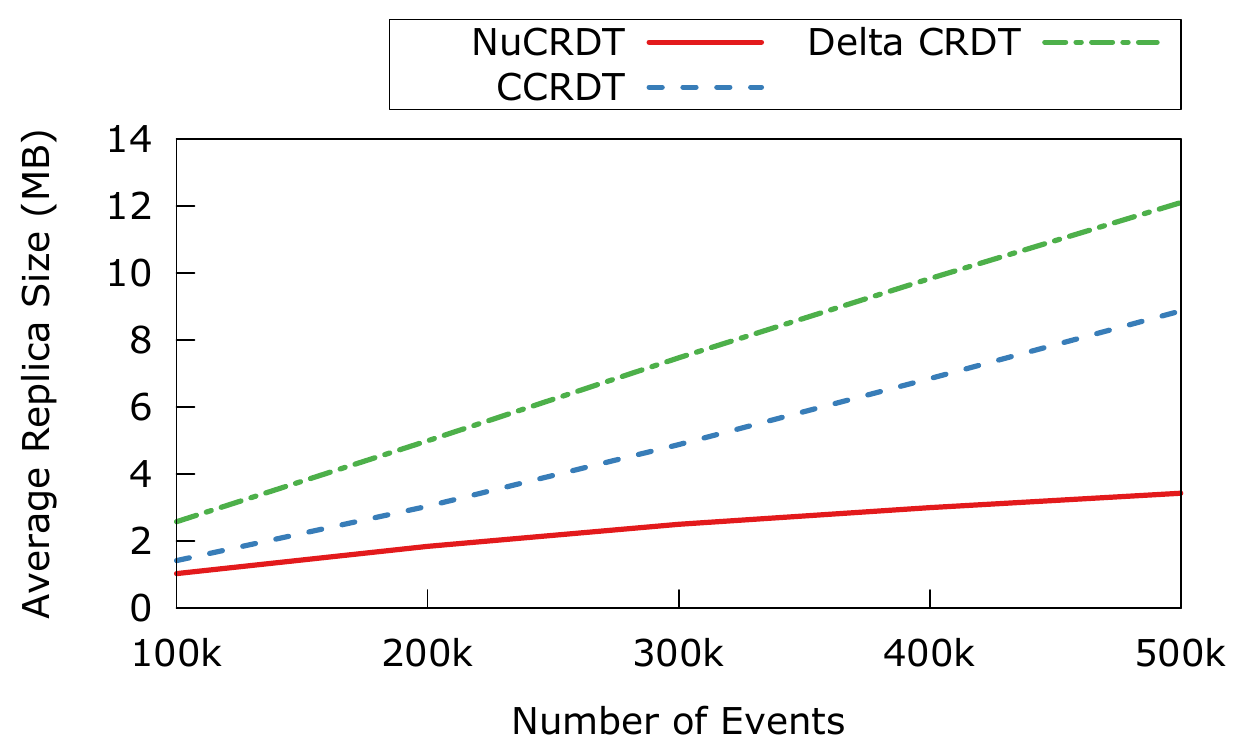}
	\caption{Top-K with removals: payload size and replica size, workload of 95/5}\label{fig:topk1}
\end{figure*}

\begin{figure*}[ht]
	\centering
	\includegraphics[width=0.45\linewidth]{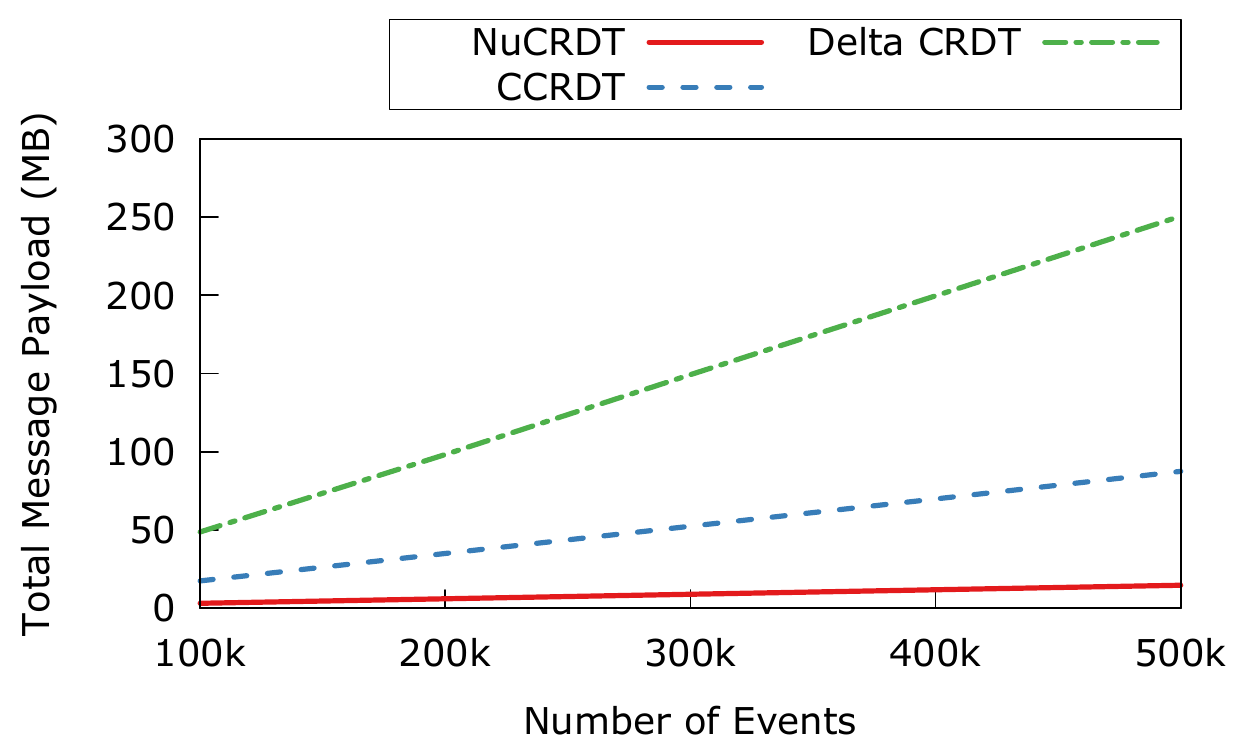}\hspace{0.08\linewidth}
	\includegraphics[width=0.45\linewidth]{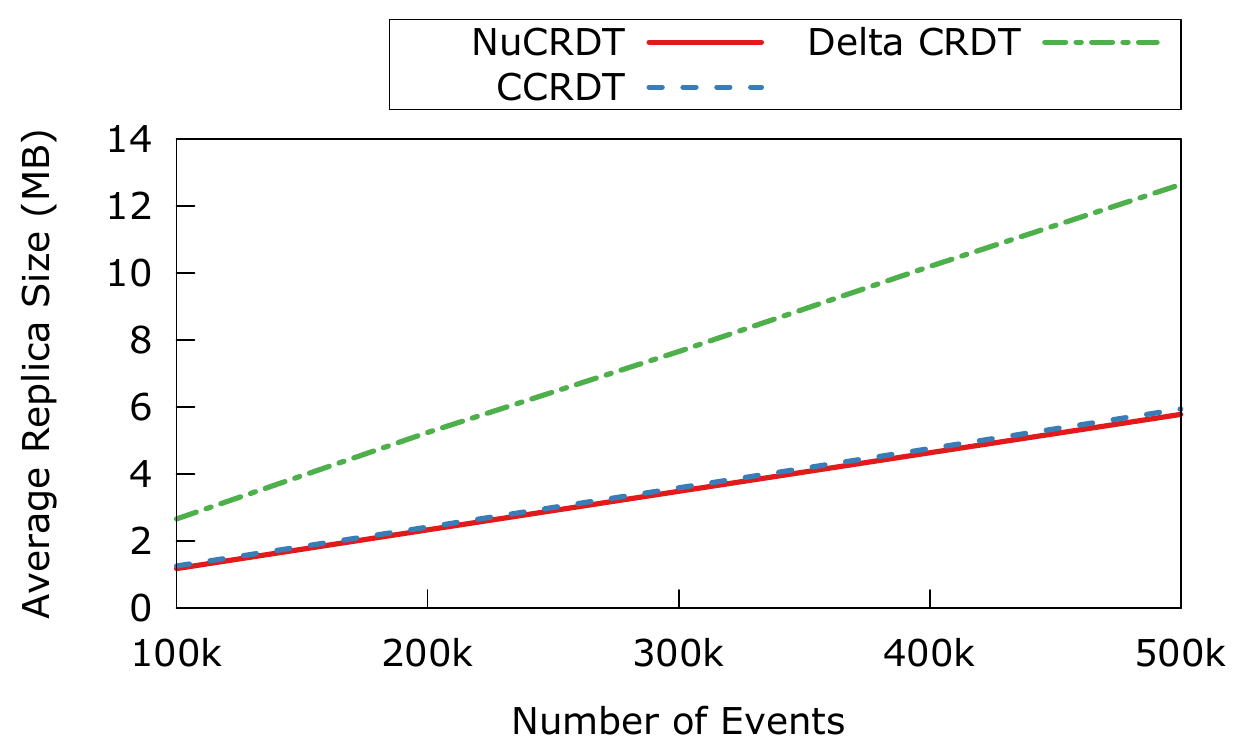}
	\caption{Top-K with removals: payload size and replica size, workload of 99.95/0.05}\label{fig:topk3}
\end{figure*}

In both workloads our design achieves a significantly lower bandwidth cost when compared to the alternatives. The reason for this is that our design only propagates operations that will be part of the top-K. In the delta-based CRDT, each replica propagates all new updates and not only those that are part of the top. In the computational CRDT design, every time the top is modified, the new top is propagated. Additionally, the proposed design of computational CRDTs always propagates removes.

The results for the replica size show that our design is also more space efficient than previous designs. This is a consequence of the fact that each replica, besides maintaining information about local operations, only keeps information from remote operations received for guaranteeing fault-tolerance and those that have influenced the top-K at some moment in the execution. The computational CRDT design additionally keeps information about all removes. The delta-based CRDT keeps information about all elements that have not been removed or overwritten by a larger value. We note that as the percentage of removes approaches zero, the replica sizes of our design and that of computational CRDT starts to converge to the same value. The reason for this is that the information maintained in both designs is similar and our more efficient handling of removes starts becoming irrelevant. The opposite is also true: as the number of removes increases, our design becomes even more space efficient when compared to the computational CRDT.

\subsection{Top Sum}

To evaluate our Top Sum design (\emph{NuCRDT}), we compare it against a delta-based CRDT map (\emph{Delta CRDT}) and a state-based computational CRDT implementing the same semantics (\emph{CCRDT}).

The top is configured to display a maximum of 100 entries. In each run, 500000 update operations were generated for 10000 Ids and with challenges awarding scores up to 1000. The values used in each operation were randomly selected using a uniform distribution. A replica synchronizes after executing 100 events.

\begin{figure*}[ht]
	\centering
	\includegraphics[width=0.44\linewidth]{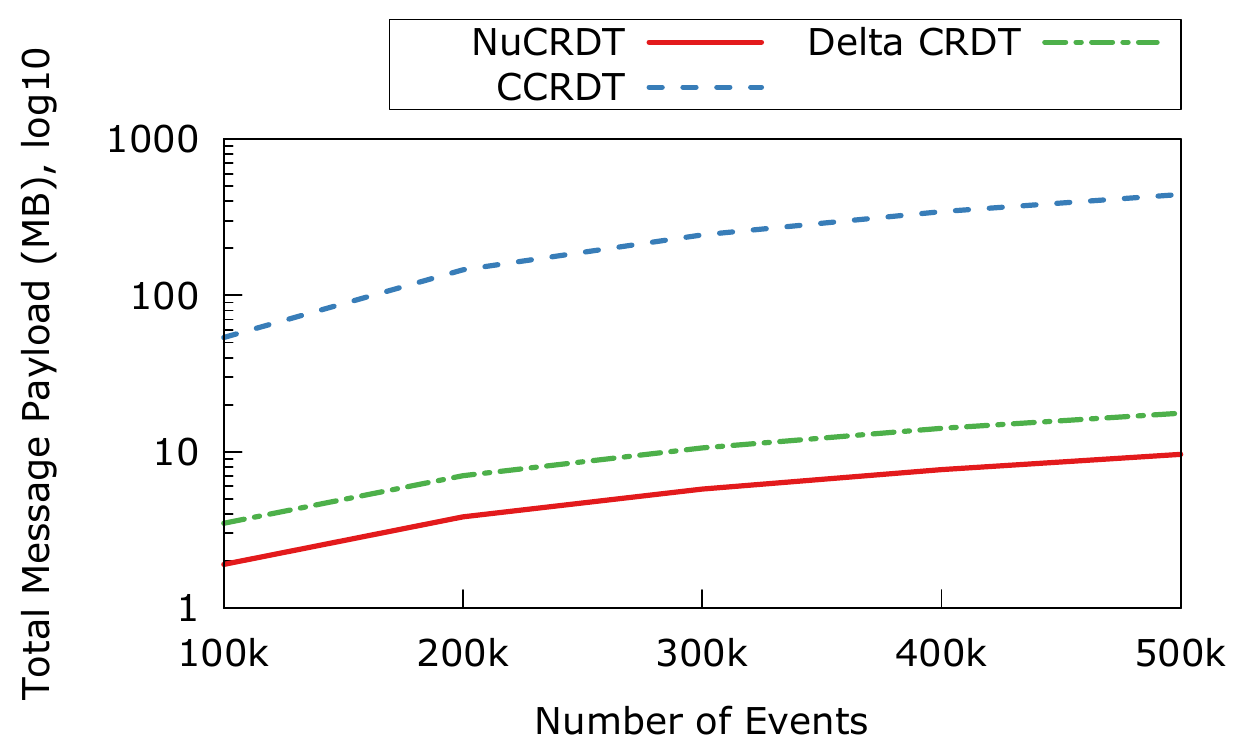}\hspace{0.08\linewidth}
	\includegraphics[width=0.44\linewidth]{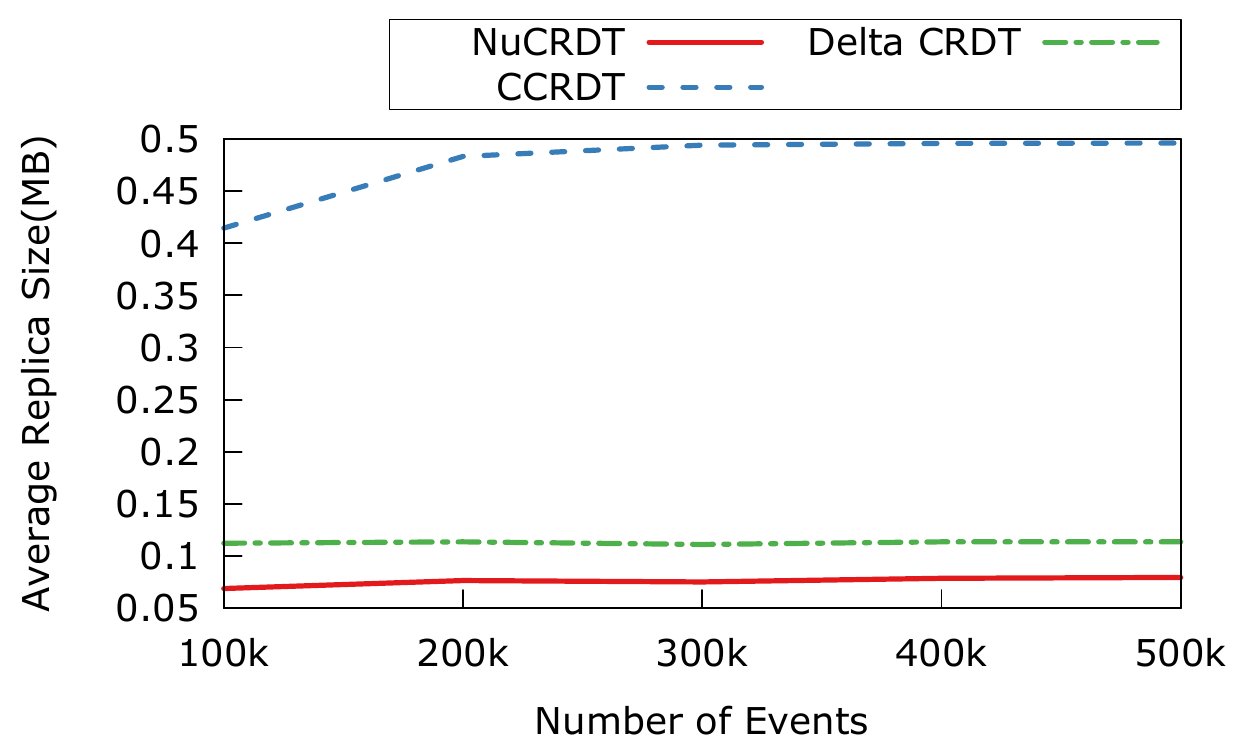}
	\caption{Top Sum: payload size and replica size}\label{fig:topsum}
\end{figure*}

Figure~\ref{fig:topsum} shows the results of our evaluation.
Our design achieves a significantly lower bandwidth cost when compared with the computational CRDT, because 
in the computational CRDT design, every time the top is modified, the new top is propagated.
When compared with the delta-based CRDTs, the bandwidth of NuCRDT is approximately $55\%$ of the bandwidth used by delta-based CRDTs. As delta-based CRDTs also include a mechanism for compacting propagated updates, the improvement
comes from the mechanisms for avoiding propagating operations that will not affect the top elements, resulting in less messages being sent.

The results for the replica size show that our design also manages to be more space efficient than previous designs. This is a consequence of the fact that each replica, besides maintaining information about local operations, only keeps information of remote operations received for guaranteeing fault-tolerance and those that have influenced the top elements at some moment in the execution.

\section{Conclusions}

In this paper we proposed the non-uniform replication model, an alternative model for replication 
that combines the advantages of both full replication, by allowing any replica to reply to
a query, and partial replication, by requiring that each replica keeps only part of the data.
We have shown how to apply this model to eventual consistency, and proposed a 
generic operation-based synchronization protocol for providing non-uniform replication. 
We further presented the designs of two useful replicated data types, the Top-K and Top Sum, that adopt this model (in appendix, we present two additional designs: Top-K without removals and Histogram).
Our evaluation shows that the application of this new replication model helps to reduce the message dissemination costs and the size of replicas.

In the future we plan to study which other data types can be designed that adopt this model
and to study how to integrate these data types in cloud-based databases.
We also want to study how the model can be applied to strongly consistent systems.

\section*{Acknowledgments}
This work has been partially funded by CMU-Portugal research project GoLocal Ref. CMUP-ERI/TIC/0046/2014, EU LightKone (grant agreement n.732505) and by FCT/MCT project NOVA-LINCS Ref. UID/CEC/04516/2013. Part of the computing resources used in this research were provided by a Microsoft Azure Research Award.

\bibliography{paper}

\newpage
\appendix

\section{APPENDIX}
In this appendix we present two additional NuCRDT designs. These designs exemplify the use of
different techniques for the creation of NuCRDTs.

\subsection{Top-K without removals}
\label{appendix-topk-normv}

A simpler example of a data type that fits our proposed replication model is a plain top-K, without support for the remove operation. This data type allows access to the top-K elements added to the object and can be used, for example, for maintaining a leaderboard in an online game. The top-K defines only one update operation, \emph{add(id,score)}, which adds element \emph{id} with score \emph{score}. The \emph{get()} operation simply returns the K elements with largest scores.  Since the data type does not support removals, and elements added to the top-K which do not fit will simply be discarded this means the only case where operations have an impact in the observable state are if they are core operations -- i.e. they are part of the top-K. This greatly simplifies the non-uniform replication model for the data type.

\begin{algorithm}[ht]
	\caption{Top-K NuCRDT}\label{topk-normv}
	\footnotesize
	\begin{algorithmic}[1]
		\State $elems : \{\langle id, score \rangle\}$ : initial $\{\}$
		\\
		\State \function{get}$()$ : set
		\State \tab \textbf{return} $elems$
		\\
		\State \textbf{prepare} \function{add}$(\emph{id}, \emph{score})$
		\State \tab \textbf{generate} $\emph{add(id}, \emph{score})$
		\\
		\State \textbf{effect} \function{add}$(\emph{id}, \emph{score})$
		\State \tab $elems = topK(elems \cup \{\langle id, score \rangle\})$
		\\
		\State \function{maskedForever}$(log_{local}, S, log_{recv})$ : set of operations
		\State \tab $adds = \{add(id_1, score_1) \in log_{local} : (\exists add(id_2, score_2) \in log_{recv} : id_1 = id_2 \land score_2 > score_1)$
		\State \tab \textbf{return} $adds$
		\\
		\State \function{mayHaveObservableImpact}$(log_{local}, S, log_{recv})$ : set of operations
		\State \tab \textbf{return} $\{\}$ \Comment{Not required for this data type}
		\\
		\State \function{hasObservableImpact}$(log_{local}, S, log_{recv})$ : set of operations
		\State \tab \textbf{return} $\{add(id, score) \in log_{local} : \langle id, score\rangle \in S.elems\}$
		\\
		\State \function{compact}$(ops)$: set of operations
		\State \tab \textbf{return} $ops$ \Comment{This data type does not use compaction}
	\end{algorithmic}
\end{algorithm}

Algorithm~\ref{topk-normv} presents the design of the top-K NuCRDT. The prepare-update \emph{add(id,score)} generates an effect-update \emph{add(id,score)}.

Each object replica maintains only a set of K tuples, \emph{elems}, with each tuple being composed of an \emph{id} and a \emph{score}. The execution of \emph{add(id,score)} inserts the element into the set, \emph{elems}, and computes the top-K of \emph{elems} using the function \emph{topK}. The order used for the \emph{topK} computation is as follows: $\langle \emph{id}_1, \emph{score}_1 \rangle > \langle \emph{id}_2, \emph{score}_2 \rangle$ iff $\emph{score}_1 > \emph{score}_2 \lor (\emph{score}_1 = \emph{score}_2 \land \emph{id}_1 > \emph{id}_2)$. We note that the \emph{topK} function returns only one tuple for each element \emph{id}.

Function \function{maskedForever} computes the adds that become masked by other add operations for the same \emph{id} that are larger according to the defined ordering. Due to the way the top is computed, the lower values for some given \emph{id} will never be part of the top. Function \function{mayHaveObservableImpact} always returns the empty set since operations in this data type are always core or forever masked. Function \function{hasObservableImpact} returns the set of unpropagated add operations which add elements that are part of the top -- essentially, the add operations that are core at the time of propagation. Function \function{compact} simply returns the given \emph{ops} since the design does not require compaction.

\subsection{Histogram}
\label{appendix-histogram}

We now introduce the Histogram NuCRDT that maintains a histogram of values added to the object. To this end, the data type maintains a mapping of bins to integers and can be used to maintain a voting system on a website. The semantics of the operations defined in the histogram is the following: \emph{add(n)} increments the bin \emph{n} by 1; \emph{merge(histogram}$_\emph{{delta}}\emph{)}$ adds the information of a histogram into the local histogram; \emph{get()} returns the current histogram.

\begin{algorithm}[ht]
	\caption{Histogram NuCRDT}\label{histogram}
	\footnotesize
	\begin{algorithmic}[1]
		\State $histogram :$ map $bin \mapsto n$ : initial $[]$
		\\
		\State \function{get}$()$ : map
		\State \tab \textbf{return} $histogram$
		\\
		\State \textbf{prepare} \function{add}$(\emph{bin})$
		\State \tab \textbf{generate} $\emph{merge}([\emph{bin} \mapsto 1])$
		\\
		\State \textbf{prepare} \function{merge}$(\emph{histogram})$
		\State \tab \textbf{generate} $\emph{merge}(\emph{histogram})$
		\\
		\State \textbf{effect} \function{merge}$(\emph{histogram}_{\emph{delta}})$
		\State \tab $\emph{histogram} = \emph{pointwiseSum}(\emph{histogram}, \emph{histogram}_{\emph{delta}})$
		\\
		\State \function{maskedForever}$(log_{local}, S, log_{recv})$ : set of operations
		\State \tab \textbf{return} $\{\}$ \Comment{Not required for this data type}
		\\
		\State \function{mayHaveObservableImpact}$(log_{local}, S, log_{recv})$ : set of operations
		\State \tab \textbf{return} $\{\}$ \Comment{Not required for this data type}
		\\
		\State \function{hasObservableImpact}$(log_{local}, S, log_{recv})$ : set of operations
		\State \tab \textbf{return} $log_{local}$
		\\
		\State \function{compact}$(ops)$: set of operations
		\State \tab $\emph{deltas} = \{\emph{hist} : \emph{merge}(\emph{hist}_{\emph{delta}}) \in ops\}$
		\State \tab \textbf{return} $\{\emph{merge}(\emph{pointwiseSum}(\emph{deltas}))\}$
	\end{algorithmic}
\end{algorithm}

This data type is implemented in the design presented in Algorithm~\ref{histogram}. The prepare-update \emph{add(n)} generates an effect-update \emph{merge(}[\emph{n} $\mapsto 1$]\emph{)}. The prepare-update operation \emph{merge(histogram)} generates an effect-update \emph{merge(histogram)}.

Each object replica maintains only a map, \emph{histogram}, which maps \emph{bins} to integers. The execution of a \emph{merge(histogram}$_\emph{{delta}}$\emph{)} consists of doing a pointwise sum of the local histogram with \emph{histogram}$_\emph{{delta}}$.

Functions \function{maskedForever} and \function{mayHaveObservableImpact} always return the empty set since operations in this data type are always core. Function \function{hasObservableImpact} simply returns $log_{local}$, as all operations are core in this data type. Function \function{compact} takes a set of instances of \emph{merge} operations and joins the histograms together returning a set containing only one \emph{merge} operation.

\end{document}